\renewcommand{\title}[1]{

\begin{center} \Large \bf #1 \end{center}
}
\renewcommand{\author}[2]{
 \begin{center} #1  \vspace{3mm} \\
  #2 \\
 \end{center}
\addvspace{\baselineskip}
}
\newtheorem{theorem}{Theorem}[section]
\newtheorem{proposition}[theorem]{Proposition}
\newtheorem{lemma}[theorem]{Lemma}
\theoremstyle{definition}
\theoremstyle{remark}
\def\propagator{\includegraphics[width=0.1\textwidth]{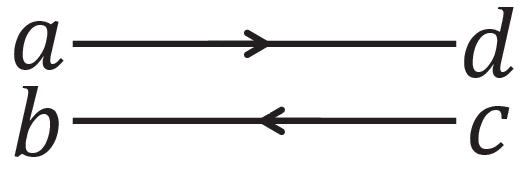}}
\def\vertexs{\includegraphics[width=0.08\textwidth]{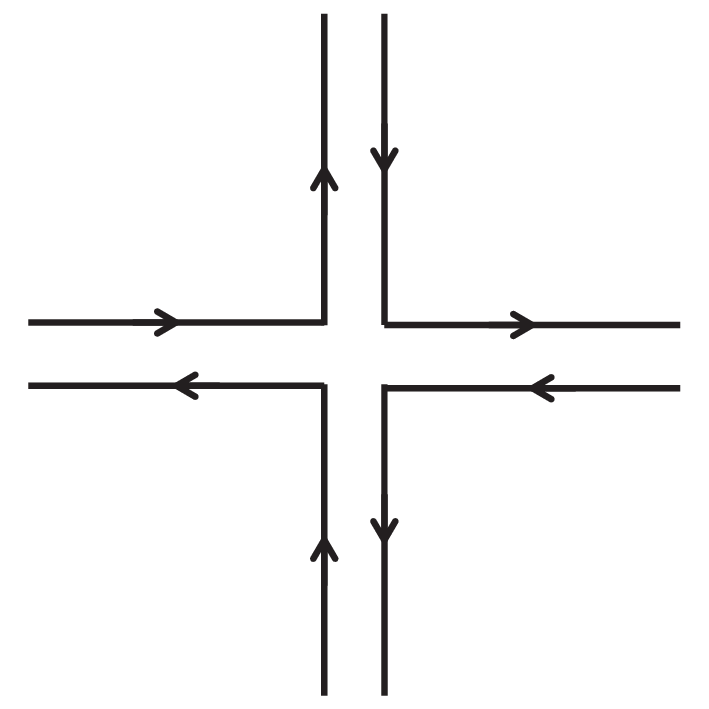}}
\begin{document}
\baselineskip 5mm
\title{Integrability of $ \Phi^4$ Matrix Model \\
 as $N$-body Harmonic Oscillator System }
\author{${}^1$Harald Grosse, ${}^2$Akifumi Sako 
}
{${}^{1,2}$
Erwin Schr\"odinger International Institute for Mathematics and Physics, \\
University of Vienna, Boltzmanngasse 9, 1090 Vienna, Austria \vspace{3mm}\\

${}^1$
Faculty of Physics, University of Vienna, Boltzmanngasse 5, 
1090 Vienna, Austria
\vspace{3mm}\\

${}^2$
Tokyo University of Science, 1-3 Kagurazaka, Shinjuku-ku, Tokyo, 162-8601, Japan
}
\noindent
\vspace{1cm}

\abstract{ 
We study a Hermitian matrix model with a kinetic
term given by $ Tr (H \Phi^2 )$, where $H$ is
a positive definite Hermitian matrix, similar as in the  Kontsevich Matrix model, but with its potential 
$\Phi^3$ replaced by $\Phi^4$.
We show that its partition function solves
an integrable Schr\"odinger-type equation 
for a non-interacting $N$-body Harmonic oscillator
system. 
}
%
%
%
\section{Introduction}\label{sect1}

Let $\Phi$ be a Hermitian $N\times N$ matrix,
$E$ be a positive diagonal $N\times N$ matrix 
$E := diag (E_1, E_2 , \cdots ,E_N )$ without
degenerate eigenvalues,
and $\eta$ be a positive real number 
as a coupling constant.
We deal in this paper with the following one Hermitian matrix model defined using this $E$:
\begin{align}
S'&= N~ Tr \{ E \Phi^2 + \frac{\eta}{4} \Phi^4  \}
\notag \\
&= N \left( 
\sum_{i,j}^N  E_{i}\Phi_{ij}\Phi_{ji}
+ \frac{\eta}{4} \sum_{i,j,k,l}^N
\Phi_{ij}\Phi_{jk}\Phi_{kl}\Phi_{li}
\right). \label{act1}
\end{align}
This matrix model is obtained 
by changing the potential of the Kontsevich model 
\cite{Kontsevich:1992ti} from $\Phi^3$ to $\Phi^4$.
It was introduced while studying a scalar field defined 
on a deformed four dimensional space-time 
and studied over years \cite{Grosse:2006tc}. 
An additional oscillator term was added 
in order to resolve the IR-UV mixing problem. 
This term leads to an external matrix 
$E$ with equally spaced Eigenvalues.
Recent developments are summarized in \cite{Branahl:2021slr}. 
\\
\bigskip

The main theorem of this paper is expressed as follows.
\begin{theorem}\label{main_thm1}
Let $Z(E, \eta)$ be the partition function defined by
$$
Z(E, \eta)= \int d \Phi ~e^{-S'} .
$$
Let $\Delta(E)$ be the Vandermonde  determinant 
$\Delta (E) := \prod_{k<l} (E_l -E_k)$.
Then the function
\begin{align}
\Psi (E, \eta ) := 
e^{-\frac{N}{2\eta} \sum_i E_i^2} \Delta(E) Z(E, \eta ) 
\label{correspPsi_Z}
\end{align}
is a zero-energy solution of a
Schr\"odinger-type differential equation being $2$-nd order 
in each of its variables,
\begin{align*}
{\mathcal H}_{HO} \Psi (E, \eta ) = 0, 
\end{align*}
where ${\mathcal H}_{HO}$ is the Hamiltonian
 of the $N$-body harmonic oscillator without interaction:
\begin{align*}
{\mathcal H}_{HO}:= - \frac{\eta}{N} \sum_{i=1}^N 
\left( 
\frac{\partial}{\partial E_i}
\right)^2  + \frac{N}{\eta}\sum_{i=1}^N (E_i)^2 .
\end{align*}
In this sense, this matrix model is a solvable system.
\end{theorem}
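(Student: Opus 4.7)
The natural route is to diagonalize $\Phi=U\Lambda U^\dagger$ and apply the Harish-Chandra--Itzykson--Zuber formula to the angular integral $\int dU\,e^{-N\operatorname{Tr}(EU\Lambda^{2}U^{\dagger})}$; this produces a factor $\det(e^{-NE_i\lambda_j^{2}})/[\Delta(E)\Delta(\lambda^{2})]$, and the $\Delta(E)$ in the denominator is cancelled exactly by the $\Delta(E)$ multiplying $Z$ in the definition of $\Psi$. The Gaussian prefactor $e^{-\frac{N}{2\eta}\sum_iE_i^{2}}$ can moreover be absorbed row-wise into the determinant: setting
$$
\phi_{ij}(E,\lambda):=e^{-\frac{N}{2\eta}E_i^{2}-NE_i\lambda_j^{2}},
$$
one arrives at
$$
\Psi(E,\eta)=K\int\prod_l d\lambda_l\,\frac{\Delta(\lambda)^{2}}{\Delta(\lambda^{2})}\,e^{-\frac{N\eta}{4}\sum_l\lambda_l^{4}}\,\det\bigl(\phi_{ij}(E,\lambda)\bigr),
$$
so that every remaining $E$-dependence of $\Psi$ sits inside the determinant.

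Next I would act with $\mathcal{H}_{HO}=\sum_i H_i$ row by row. A direct calculation shows $H_i\phi_{ij}=[1-2NE_i\lambda_j^{2}-N\eta\lambda_j^{4}]\phi_{ij}$, so by multilinearity $H_i\det\phi=\det\phi-2NE_i\det\phi^{[i,1]}-N\eta\det\phi^{[i,2]}$, where $\phi^{[i,r]}$ denotes $\phi$ with its $i$-th row modified by an extra factor $\lambda_j^{r}$ in column $j$. The crucial algebraic identity
$$
-2NE_i\lambda_j^{2}\phi_{ij}=\lambda_j\partial_{\lambda_j}\phi_{ij}
$$
converts the troublesome $E_i$-dependent term into a $\lambda_j$-derivative. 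Writing $\Theta^{(i)}$ for $\phi$ with its $i$-th row replaced by $(\lambda_j\partial_{\lambda_j}\phi_{ij})_{j=1}^N$, a cofactor expansion gives $\sum_i\det\Theta^{(i)}=\sum_j\lambda_j\partial_{\lambda_j}\det\phi$, and similarly $\sum_i\det\phi^{[i,2]}=(\sum_l\lambda_l^{4})\det\phi$. Combined, these yield
$$
\sum_i H_i\det\phi=N\det\phi+\sum_j\lambda_j\partial_{\lambda_j}\det\phi-N\eta\Bigl(\sum_l\lambda_l^{4}\Bigr)\det\phi.
$$

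The middle term is then integrated by parts in each $\lambda_j$; boundary contributions vanish by the quartic Gaussian decay, and the derivative is transferred onto the prefactor $F'(\lambda):=\frac{\Delta(\lambda)^{2}}{\Delta(\lambda^{2})}e^{-\frac{N\eta}{4}\sum_l\lambda_l^{4}}$. The essential point is that $\Delta(\lambda)^{2}/\Delta(\lambda^{2})$ is homogeneous of degree $0$ in $\lambda$ (numerator and denominator both have degree $N(N-1)$), so by Euler's theorem only the Gaussian contributes to the Euler derivative: $\sum_j\lambda_j\partial_{\lambda_j}F'=-N\eta(\sum_l\lambda_l^{4})F'$. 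The $N\det\phi$ and $N\eta(\sum\lambda^{4})\det\phi$ pieces then cancel pairwise against the terms produced by the integration by parts, and $\mathcal{H}_{HO}\Psi=0$ follows.

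The main obstacle I expect is technical rather than conceptual: the determinant must be kept intact throughout. If one expanded $\det(e^{-NE_i\lambda_j^{2}})$ into individual monomials $\prod_i e^{-NE_i\lambda_{\sigma(i)}^{2}}$, each piece paired with $\Delta(\lambda)^{2}/\Delta(\lambda^{2})$ would have non-integrable poles along the hyperplanes $\lambda_i+\lambda_j=0$; these are cancelled only by the simultaneous vanishing of the entire determinant on that locus. Consequently both the swap identity for $\sum_i\det\Theta^{(i)}$ and the subsequent integration by parts in $\lambda_j$ must be carried out on the undecomposed determinantal integrand, which dictates the ordering of the manipulations above.
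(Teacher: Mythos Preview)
Your argument is correct, but it follows a different route from the paper's \emph{primary} proof and is a variant of the paper's secondary one.

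The paper's main proof (Sections~2--3) never diagonalises $\Phi$ or invokes HCIZ. Instead it derives a Schwinger--Dyson identity $\mathcal{L}_{SD}Z=0$ directly from $\int d\Phi\,\partial_{\Phi_{ij}}(\Phi_{ij}e^{-S})=0$, rewrites the resulting $H$-derivatives as $E$-derivatives (using the characteristic polynomial and the standard $U(N)$ radial Laplacian formula), and then shows by a pure operator computation that
\[
e^{-\frac{N}{2\eta}\sum E_i^2}\,\Delta(E)\,\mathcal{L}_{SD}\,\Delta(E)^{-1}\,e^{\frac{N}{2\eta}\sum E_i^2}=-\mathcal{H}_{HO}.
\]
The key combinatorial inputs there are $\sum_{i\neq j\neq k}\frac{1}{(E_i-E_j)(E_i-E_k)}=0$ and $\sum_{i\neq j}\frac{2E_i}{E_i-E_j}=N^2-N$. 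This approach never touches the eigenvalues of $\Phi$ and yields the statement as an operator identity, independent of any integral representation of $Z$.

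Your proof is essentially the paper's \emph{alternative} proof in Section~4, reorganised. There the authors also apply HCIZ, but they immediately symmetrise the determinant to the product form $\prod_i e^{-NE_ix_i^2}\prod_{l<k}\frac{x_k-x_l}{x_k+x_l}$ (treated as a Cauchy principal value), and then integrate by parts in $x_j$; the crucial cancellation is their antisymmetry identity $\sum_{m\neq n}\frac{2x_mx_n}{(x_m-x_n)(x_m+x_n)}=0$. Your homogeneity argument for $\Delta(\lambda)^2/\Delta(\lambda^2)$ under the Euler operator is exactly the same identity in disguise, since $\sum_j\lambda_j\partial_{\lambda_j}\log\bigl(\Delta(\lambda)^2/\Delta(\lambda^2)\bigr)=\sum_{j\neq m}\frac{2\lambda_j\lambda_m}{\lambda_j^2-\lambda_m^2}$. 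The genuine difference is that you keep the HCIZ determinant intact and act row by row, which makes the integrand $F'(\lambda)\det\phi$ globally smooth and sidesteps the principal-value interpretation that the paper has to invoke after symmetrising; this is a cleaner handling of the $\lambda_i+\lambda_j=0$ poles, and your last paragraph correctly identifies why.
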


\section{Schwinger-Dyson Equation}\label{sect2}
Let $\Phi$ be a Hermitian $N\times N$ matrix.
Let $H$ be a positive  Hermitian $N\times N$ matrix
with nondegenerate eigenvalues
$\{E_1, E_2 , \cdots ,E_N ~ | ~ E_i \neq E_j ~\mbox{for}~ i \neq j \}$.
$\eta$ is a real positive number.
We consider the following action
\begin{align}
S&= N~ Tr \{ H \Phi^2 + \frac{\eta}{4} \Phi^4  \}
\notag \\
&= N \left( 
\sum_{i,j,k}^N  H_{ij}\Phi_{jk}\Phi_{ki}
+ \frac{\eta}{4} \sum_{i,j,k,l}^N
\Phi_{ij}\Phi_{jk}\Phi_{kl}\Phi_{li}
\right).
\label{action_S}
\end{align}
The partition function is defined by
\begin{align}
Z(E, \eta) := \int_{h_N} d \Phi ~e^{-S} ,
\label{partitionfunction}
\end{align}
and  we denote the expectation value with this action $S$
by 
$\displaystyle \langle O \rangle := \int_{h_N} d \Phi ~ O e^{-S} $.
Note that we do not normalize it here, i.e. 
$\langle 1 \rangle = Z(E, \eta) \neq 1$.
Here the integral measure is the ordinary Haar measure.
Using  the real variables defined by 
$\Phi_{ij}= \Phi_{ij}^{Re} + i \Phi_{ij}^{Im}$,
the measure is given as $\displaystyle \int_{h_N} d \Phi := 
\prod_{i}^N \int_{-\infty}^{\infty} d\Phi_{ii} 
\prod_{k<l} \int_{-\infty}^{\infty} d\Phi_{kl}^{Re}  
\int_{-\infty}^{\infty} d\Phi_{kl}^{Im}  
$.
Note that the partition function $Z(E, \eta) $ depends only on the
eigenvalues of $H$ because the integral measure is $U(N)$ invariant.
Indeed $Z(E, \eta) $ is equal to the partition function 
obtained from the action defined by $S'$ in (\ref{act1}).
\\
\bigskip

In the following, we use the notation:
\begin{align}
\frac{\partial}{\partial \Phi_{ij}}= \frac{1}{2}
\left( \frac{\partial}{\partial \Phi_{ij}^{Re}} - i 
\frac{\partial}{\partial \Phi_{ij}^{Im}}
\right) \quad \mbox{for} \ (i \neq j ) .
\end{align}
For the diagonal elements $\Phi_{ii} (i= 1,2, \cdots ,N)$, 
the corresponding partial derivatives are the
usual ones.
The Schwinger-Dyson equation is derived from 
\begin{align}
\int_{h_N} \frac{\partial}{\partial \Phi_{ij}}
\left(
\Phi_{ij} e^{-S}
\right) = 0,
\end{align}
which is expressed as
\begin{align}
Z(E, \eta) - N \sum_k (\langle
H_{ki}\Phi_{ij}\Phi_{jk}
\rangle 
+\langle
H_{jk}\Phi_{ki}\Phi_{ij}
\rangle 
)
-N\eta 
 \sum_{k,l} \langle
\Phi_{jk}\Phi_{kl}\Phi_{li}\Phi_{ij}
\rangle =0 .
\end{align}
Taking sum over the indices $i,j$ and
using
\begin{align}
\frac{\partial Z(E, \eta) }{\partial H_{ij}} =
-N \sum_k \langle  \Phi_{jk}\Phi_{ki} \rangle ,
\quad
 \frac{\partial^2 Z(E, \eta) }{\partial H_{ij}\partial H_{mn}} =
N^2  \sum_{k,l} \langle \Phi_{jk}\Phi_{ki}
\Phi_{nl}\Phi_{lm} \rangle ,
\end{align}
a partial differential equation is obtained:
\begin{align}
{\mathcal L}_{SD}^H Z(E, \eta) = 0 . \label{SD_H}
\end{align}
Here ${\mathcal L}_{SD}^H $ is a second order differential operator
defined by
\begin{align}
{\mathcal L}_{SD}^H:=
N^2 + 2 \sum_{i,k} H_{ki} \frac{\partial  }{\partial H_{ki}}
-\frac{\eta}{N} \sum_{i,k}
\left( 
\frac{\partial  }{\partial H_{ki}}\frac{\partial  }{\partial H_{ik}}
\right) .
\end{align}

Next we rewrite this Schwinger-Dyson equation by using eigenvalues
of $H$ i.e. 
$E_n (n= 1,2, \cdots , N)$.
References \cite{Itzykson:1992ya,Kimura} are helpful in the following calculations.
Let $P(x)$ be the characteristic polynomial: 
$$P(x): = \det (x~ Id_N - H) = \prod_{i=1}^N (x-E_i).$$
Using this $P(x)$,
\begin{align}
\frac{\partial E_j}{\partial H_{ki}}
= \frac{(-1)^{k+i} | E_j~ Id_N - H |_{ki} }{P'(E_j)}
\label{H-E_formula}
\end{align}
is obtained.
Here $ | M |_{kj} $ denote the minors of the matrix $M$ defined
by the determinant of the smaller matrix 
obtained by removing the $k$-th row and $j$-th column from $M$.
Using the formula (\ref{H-E_formula}),
\begin{align}
\sum_{i,j} H_{ij} \frac{\partial  Z(E, \eta)}{\partial H_{ij}}
=& \sum_{i,j,k} H_{ij} \frac{\partial  E_k}{\partial H_{ij}}
\frac{\partial  Z(E, \eta)}{\partial E_{k}} \notag \\
=& \sum_{i,j,k} 
H_{ij} \frac{(-1)^{i+j} | E_k~ Id_N - H |_{ij} }{P'(E_k)}
\frac{\partial  Z(E, \eta)}{\partial E_{k}} \notag \\
=& -\sum_{i,j,k} 
(\delta_{ij}E_k - H_{ij} ) \frac{(-1)^{i+j} | E_k~ Id_N - H |_{ij} }{P'(E_k)}
\frac{\partial  Z(E, \eta)}{\partial E_{k}}  \notag\\
&+ \sum_{i,j,k} 
\delta_{ij}E_k \frac{(-1)^{i+j} | E_k~ Id_N - H |_{ij} }{P'(E_k)}
\frac{\partial  Z(E, \eta)}{\partial E_{k}}.
\end{align}
The first term in the last line is equal to 
$\displaystyle -\sum_{k,i} \frac{P(E_k)}{P'(E_k)} \frac{\partial  Z(E, \eta)}{\partial E_{k}} = 0$,
because $P(x)$ is the characteristic polynomial and $E_k$ is
one of eigenvalues of $H$.
$\sum_{i,j} 
\delta_{ij} (-1)^{i+j} | E_k~ Id_N - H |_{ij} $
in the second term is $P'(E_k)$.
Then we find 
\begin{align}
\sum_{i,j} H_{ij} \frac{\partial  Z(E, \eta)}{\partial H_{ij}}
= \sum_k E_k  \frac{\partial  Z(E, \eta)}{\partial E_{k}} .
\label{H-E_formula2}
\end{align}
Next step, we rewrite the Laplacian 
$\displaystyle \sum_{i,k}
\left( 
\frac{\partial  }{\partial H_{ki}}\frac{\partial  }{\partial H_{ik}}
\right)Z(E, \eta)$
by $E_k$.
It is a well-known fact that
 by using the
Vandermonde  determinant 
$\Delta (E) := \prod_{k<l} (E_l -E_k)$
the Jacobian for the change of variables is obtained as follows:
\begin{align*}
d H := 
\prod_{i}^N dH_{ii} 
\prod_{k<l} dH_{kl}^{Re}  
dH_{kl}^{Im} 
= \Delta^2 (E) ( \prod_{i}^N dE_{i} ) 
(\prod_{k<l} (U^{-1}dU)_{kl}^{Re} (U^{-1}dU)_{kl}^{Im}) .
\end{align*}
Then the Laplacian is rewritten as
\begin{align}
\sum_{i,k}
\left( 
\frac{\partial  }{\partial H_{ki}}\frac{\partial  }{\partial H_{ik}}
\right)~Z(E, \eta)
&=\frac{1}{\Delta^2 (E)}
\sum_i^N \frac{\partial}{\partial E_i}
\left( \Delta^2 (E) \frac{\partial}{\partial E_i} \right)~ Z(E, \eta)
 \label{H-E_laplace}\\
&=
\left\{
\sum_{i=1}^N \left( \frac{\partial}{\partial E_i} \right)^2
+\sum_{i \neq j} \frac{1}{E_i - E_j}
\left( \frac{\partial}{\partial E_i} - 
\frac{\partial}{\partial E_j} \right)
\right\}~ Z(E, \eta) . \notag
\end{align}
Here $\displaystyle \sum_{i \neq j}$ means 
$\displaystyle \sum_{i,j=1 , i \neq j}^N$.
From (\ref{SD_H}) ,  (\ref{H-E_formula2}) , and (\ref{H-E_laplace}),
we obtain the following.
\begin{theorem}
The partition function defined by (\ref{partitionfunction})
satisfies 
\begin{align}
{\mathcal L}_{SD} Z(E, \eta) = 0 ,
\label{SD_2}
\end{align}
where 
\begin{align}
{\mathcal L}_{SD} :=
\left\{
\frac{\eta}{N} \sum_{i=1}^N \left( \frac{\partial}{\partial E_i} \right)^2
+
\frac{\eta}{N} \sum_{i \neq j} \frac{1}{E_i - E_j}
\left( \frac{\partial}{\partial E_i} - 
\frac{\partial}{\partial E_j} \right)
-2 \sum_k E_k  \frac{\partial }{\partial E_{k}} -N^2
\right\}~. \label{LSD}
\end{align}
\end{theorem}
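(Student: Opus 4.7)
The plan is a direct assembly of the three identities already derived above. I would start from the $H$-variable Schwinger--Dyson equation (\ref{SD_H}), namely $\mathcal{L}_{SD}^H Z = 0$, which reads explicitly
\[
\Bigl\{N^2 + 2\sum_{i,k} H_{ki} \frac{\partial}{\partial H_{ki}} - \frac{\eta}{N} \sum_{i,k} \frac{\partial}{\partial H_{ki}} \frac{\partial}{\partial H_{ik}}\Bigr\} Z(E,\eta) = 0.
\]
Because $Z$ is $U(N)$-invariant (the Haar measure is invariant and the action only sees the eigenvalues of $H$), $Z(E,\eta)$ depends on $H$ exclusively through the unordered tuple $(E_1,\ldots,E_N)$, so the chain rule lets me replace the two $H$-derivative combinations above by the corresponding $E$-derivative expressions supplied in the preceding discussion.

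Those replacements are exactly the content of (\ref{H-E_formula2}) and (\ref{H-E_laplace}): the linear piece collapses to $\sum_k E_k \partial_{E_k} Z$, and the matrix Laplacian collapses to the radial form $\bigl\{\sum_i (\partial_{E_i})^2 + \sum_{i\neq j} (E_i-E_j)^{-1}(\partial_{E_i}-\partial_{E_j})\bigr\} Z$. Substituting these back into ${\mathcal L}_{SD}^H Z = 0$ and multiplying through by $-1$ rearranges the equation into precisely $\mathcal{L}_{SD} Z(E,\eta)=0$ with $\mathcal{L}_{SD}$ as in (\ref{LSD}). That is the whole proof.

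The main obstacle does not lie in the theorem itself but in the two preparatory computations already carried out. Identity (\ref{H-E_formula2}) relies on the algebraic decomposition $H_{ij} = -(\delta_{ij}E_k - H_{ij}) + \delta_{ij}E_k$, the vanishing $P(E_k)=0$, and the cofactor identity $\sum_{i,j}\delta_{ij}(-1)^{i+j}|E_k\,\mathrm{Id}_N - H|_{ij} = P'(E_k)$; identity (\ref{H-E_laplace}) relies on the standard diagonalization of the Lebesgue measure on Hermitian matrices with Jacobian $\Delta(E)^2$ together with $U(N)$-invariance of $Z$, which discards every angular derivative and leaves only the radial part of the matrix Laplacian. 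Granted both lemmas, the theorem is obtained in one line of bookkeeping, so the only step requiring any care in writing it up is tracking the overall sign and the $\eta/N$ prefactor that distinguish $\mathcal{L}_{SD}^H$ from the $\mathcal{L}_{SD}$ of (\ref{LSD}).
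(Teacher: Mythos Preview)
Your proposal is correct and matches the paper's own argument essentially line for line: the paper likewise states that the theorem follows directly from combining (\ref{SD_H}), (\ref{H-E_formula2}), and (\ref{H-E_laplace}), with no additional ingredients. Your remark about the overall sign flip between $\mathcal{L}_{SD}^H$ and $\mathcal{L}_{SD}$ is the only bookkeeping needed, and it is handled correctly.
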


In Appendix \ref{appenA},
this  Schwinger-Dyson equation is checked 
by using perturbative calculations, as a cross check.

\section{Diagonalization of ${\mathcal L}_{SD}$}\label{sect3}
In this section we prove the main theorem (Theorem \ref{main_thm1}).\\

As the first step we prove the following proposition.
\begin{proposition}\label{prop3_1}
The differential operator ${\mathcal L}_{SD} $
defined in (\ref{LSD}) is transformed into 
the Hamiltonian of the $N$-body harmonic oscillator as
\begin{align}
e^{-\frac{N}{2\eta} \sum_i E_i^2} \Delta(E)
{\mathcal L}_{SD} 
\Delta^{-1}(E) e^{\frac{N}{2\eta} \sum_i E_i^2} 
= \frac{\eta}{N} \sum_{i=1}^N 
\left( 
\frac{\partial}{\partial E_i}
\right)^2  - \frac{N}{\eta}\sum_{i=1}^N (E_i)^2 .
\end{align}
We denote this Hamiltonian by ${\mathcal H}_{HO}$:
\begin{align}\label{hamiltonian_ho}
-{\mathcal H}_{HO}:=  \frac{\eta}{N} \sum_{i=1}^N 
\left( 
\frac{\partial}{\partial E_i}
\right)^2  - \frac{N}{\eta}\sum_{i=1}^N (E_i)^2 .
\end{align}
\end{proposition}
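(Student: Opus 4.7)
The plan is to conjugate $\mathcal{L}_{SD}$ one factor at a time, organize the result by the differential order, and verify that only the second-order and multiplicative pieces of the harmonic oscillator survive.

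First I would use the standard identity that, for a multiplication operator $u(E)$,
\begin{equation*}
u(E)\,\partial_i\,u(E)^{-1}=\partial_i - \partial_i\log u(E).
\end{equation*}
With $u=e^{-\frac{N}{2\eta}\sum_j E_j^2}\Delta(E)$ one computes
\begin{equation*}
\partial_i\log u=-\tfrac{N}{\eta}E_i+\sum_{k\neq i}\tfrac{1}{E_i-E_k},
\end{equation*}
using $\partial_i\log\Delta(E)=\sum_{k\neq i}(E_i-E_k)^{-1}$. Setting $D_i:=u\,\partial_i\,u^{-1}=\partial_i+\tfrac{N}{\eta}E_i-\sum_{k\neq i}(E_i-E_k)^{-1}$ and writing $f_i:=\tfrac{N}{\eta}E_i-\sum_{k\neq i}(E_i-E_k)^{-1}$, I would then substitute $D_i$ in place of every $\partial_i$ in (\ref{LSD}) and expand.

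Next I would sort the resulting terms by order. Using $D_i^2=\partial_i^2+2f_i\partial_i+(\partial_i f_i)+f_i^2$, the second-order pieces contribute only $\tfrac{\eta}{N}\sum_i\partial_i^2$. For the first-order part, the $\tfrac{N}{\eta}E_i$-piece of $2f_i\partial_i$ produces $+2\sum_i E_i\partial_i$, which cancels against $-2\sum_k E_k\partial_k$ in $\mathcal{L}_{SD}$, while the $-\sum_{k\neq i}(E_i-E_k)^{-1}$ piece produces $-\tfrac{\eta}{N}\sum_{i\neq j}\tfrac{\partial_i-\partial_j}{E_i-E_j}$ after the standard antisymmetrization $\sum_{i,k:k\neq i}\tfrac{\partial_i}{E_i-E_k}=\tfrac12\sum_{i\neq j}\tfrac{\partial_i-\partial_j}{E_i-E_j}$; this cancels the $\partial$-piece of the Vandermonde drift in $\mathcal{L}_{SD}$. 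Hence no first-order terms remain.

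The real work is the zeroth-order bookkeeping. Collecting the multiplicative contributions from $\tfrac{\eta}{N}\sum_i\bigl((\partial_i f_i)+f_i^2\bigr)$, from $\tfrac{\eta}{N}\sum_{i\neq j}\tfrac{f_i-f_j}{E_i-E_j}$, from $-2\sum_i E_i f_i$, and from the constant $-N^2$, one needs three identities:
\begin{equation*}
\sum_{i\neq k}\tfrac{E_i}{E_i-E_k}=\tfrac{N(N-1)}{2},\qquad \partial_i\sum_{k\neq i}\tfrac{1}{E_i-E_k}=-\sum_{k\neq i}\tfrac{1}{(E_i-E_k)^2},
\end{equation*}
and the Calogero-type partial fraction identity
\begin{equation*}
\sum_{\substack{i,k,l\,\text{distinct}}}\tfrac{1}{(E_i-E_k)(E_i-E_l)}=0,
\end{equation*}
which reduces the square $\bigl(\sum_{k\neq i}(E_i-E_k)^{-1}\bigr)^2$ to $\sum_{k\neq i}(E_i-E_k)^{-2}$. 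Using these, the $(E_i-E_k)^{-2}$-contributions appear with coefficients $\tfrac{\eta}{N}+\tfrac{\eta}{N}-\tfrac{2\eta}{N}=0$, the numerical constants $N-N(N-1)+N(N-1)+N(N-1)-N^2=0$ cancel, and the $E_i^2$-contributions add to $\tfrac{N}{\eta}-\tfrac{2N}{\eta}=-\tfrac{N}{\eta}$. The final result is $\tfrac{\eta}{N}\sum_i\partial_i^2-\tfrac{N}{\eta}\sum_i E_i^2=-\mathcal{H}_{HO}$, as claimed.

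The main obstacle is precisely this last step: the singular terms $(E_i-E_j)^{-2}$ and the cubic triple-sum $\sum\tfrac{1}{(E_i-E_j)(E_i-E_k)}$ appear in three different places with different coefficients, and it is only after invoking the Calogero identity and symmetrizing carefully that all divergences disappear; the arithmetic of the numerical constants, involving $-N^2$ together with several $\pm N(N-1)$ contributions, must also balance exactly, which is the non-obvious cross-check that fixes the Gaussian coefficient $N/(2\eta)$ in the conjugating factor.
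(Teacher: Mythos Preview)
Your proof is correct and proceeds by essentially the same route as the paper: a direct conjugation computation that hinges on the Calogero-type identity $\sum_{i,k,l\ \text{distinct}}(E_i-E_k)^{-1}(E_i-E_l)^{-1}=0$ to kill the triple sums, and on the pairing $\sum_{i\neq k}E_i/(E_i-E_k)=\tfrac{N(N-1)}{2}$ to balance the constants against $-N^2$. The only cosmetic difference is the direction of the conjugation: the paper conjugates $\tfrac{\eta}{N}\sum_i\partial_i^2$ by $e^{-\frac{N}{2\eta}\sum E_i^2}\Delta(E)$ and recognizes $\mathcal{L}_{SD}+\tfrac{N}{\eta}\sum E_i^2$ in the result, whereas you conjugate $\mathcal{L}_{SD}$ and read off $-\mathcal{H}_{HO}$; the underlying bookkeeping and the key identities are identical.
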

\begin{proof}
The proof is done by direct calculations.\\
We calculate 
$\Delta^{-1}(E) e^{\frac{N}{2\eta} \sum_i E_i^2}  
\frac{\eta}{N} \sum_{i=1}^N 
\left( 
\frac{\partial}{\partial E_i}
\right)^2
 e^{-\frac{N}{2\eta} \sum_i E_i^2} \Delta(E)$ at first.
\begin{align}
&\Delta^{-1}(E) e^{\frac{N}{2\eta} \sum_i E_i^2}  \frac{\eta}{N} \sum_{i=1}^N 
\left( 
\frac{\partial}{\partial E_i}
\right)^2 e^{-\frac{N}{2\eta} \sum_i E_i^2} \Delta(E) \notag \\
=&  \frac{\eta}{N} \sum_{i=1}^N \Big\{
\sum_{j,k=1, j\neq i, k\neq i}^N
\frac{1}{(E_i -E_j )(E_i -E_k)}
-\sum_{j=1, j\neq i}^N \frac{1}{(E_i - E_j)^2} \Big\}  
\label{prop3_1_1}\\
&-\sum_{i=1}^N \Big\{
 \Big(\sum_{j=1, j\neq i}^N \frac{ 2 E_i}{E_i -E_j} \Big)
+ 1 \Big\} \label{prop3_1_2}\\
&+ 
\frac{\eta}{N} \sum_{i=1}^N \left( \frac{\partial}{\partial E_i} \right)^2
+
\frac{\eta}{N} \sum_{i \neq j} \frac{1}{E_i - E_j}
\left( \frac{\partial}{\partial E_i} - 
\frac{\partial}{\partial E_j} \right)
-2 \sum_{k=1}^N E_k  \frac{\partial }{\partial E_{k}} 
+ \frac{N}{\eta} \sum_{i=1}^N (E_i)^2.
\end{align}
(\ref{prop3_1_1}) is equal to $0$ since
\begin{align*}
&\sum_{j\neq i, k\neq i, j\neq k}
\frac{1}{(E_i -E_j )(E_i -E_k)} \\
&= \frac{1}{3} \sum_{j\neq i, k\neq i, j\neq k}
\Big(
\frac{1}{(E_i -E_j )(E_i -E_k)}+
\frac{1}{(E_j -E_i )(E_j -E_k)}+
\frac{1}{(E_k -E_i )(E_k -E_j)}
\Big)=0.
\end{align*}
(\ref{prop3_1_2}) is written as follows.
\begin{align}
-\sum_{i=1}^N \Big\{
 \Big(\sum_{j=1, j\neq i}^N \frac{ 2 E_i}{E_i -E_j} \Big)
+ 1 \Big\} 
&= - \sum_{i\neq j} 
 \Big( \frac{ E_i}{E_i -E_j} - 
 \frac{ E_j}{E_i -E_j} \Big)
- N
\notag \\
& = - (N^2 -N )-N = -N^2 .
\end{align}
Then we obtain
\begin{align}
\Delta^{-1}(E) e^{\frac{N}{2\eta} \sum_i E_i^2} ~{\mathcal H}_{HO} ~e^{-\frac{N}{2\eta} \sum_i E_i^2} \Delta(E)
= - {\mathcal L}_{SD} .
\end{align}
\end{proof}
We introduce a transformed partition function $\Psi (E, \eta )$ by
\begin{align}\label{Psi_def_3_8}
\Psi (E, \eta ) := 
e^{-\frac{N}{2\eta} \sum_i E_i^2} \Delta(E) Z(E, \eta ) .
\end{align}
Note that this transformation is invertible.
Then the following theorem follows from 
Proposition \ref{prop3_1} immediately.
\begin{theorem} \label{Thm3_2}
The transformed partition function $\Psi (E, \eta ) $
is a zero-energy solution of the 
Schr\"odinger-type differential equation:
\begin{align}
{\mathcal H}_{HO} \Psi(E, \eta ) = 0.
\label{Schrodinger}
\end{align}
Here ${\mathcal H}_{HO}$ is the
$N$-body harmonic oscillator Hamiltonian (\ref{hamiltonian_ho}).
\end{theorem}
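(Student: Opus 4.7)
The proof strategy is to combine the two ingredients already in hand: the Schwinger-Dyson equation ${\mathcal L}_{SD} Z(E,\eta)=0$ from (\ref{SD_2}), and the conjugation identity of Proposition \ref{prop3_1} relating ${\mathcal L}_{SD}$ to ${\mathcal H}_{HO}$. First I would rewrite Proposition \ref{prop3_1} as the operator equation
\[
{\mathcal L}_{SD} \;=\; -\,\Delta^{-1}(E)\,e^{\frac{N}{2\eta}\sum_i E_i^2}\;{\mathcal H}_{HO}\;e^{-\frac{N}{2\eta}\sum_i E_i^2}\,\Delta(E),
\]
which is valid on the open region where all $E_i$ are distinct, since both $\Delta(E)$ and the Gaussian factor are nowhere-vanishing multiplication operators there, and conjugation by them is therefore invertible.

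Next I would apply both sides of this identity to $Z(E,\eta)$ and invoke (\ref{SD_2}). Using the definition (\ref{Psi_def_3_8}) of $\Psi$, which is precisely the inner factor $e^{-\frac{N}{2\eta}\sum_i E_i^2}\Delta(E) Z(E,\eta)$ appearing after ${\mathcal H}_{HO}$, this gives
\[
0 \;=\; {\mathcal L}_{SD} Z(E,\eta) \;=\; -\,\Delta^{-1}(E)\,e^{\frac{N}{2\eta}\sum_i E_i^2}\;{\mathcal H}_{HO}\,\Psi(E,\eta).
\]
Multiplying through from the left by the nowhere-vanishing factor $\Delta(E)\,e^{-\frac{N}{2\eta}\sum_i E_i^2}$ cancels the outer conjugation and leaves ${\mathcal H}_{HO}\Psi(E,\eta)=0$, which is the desired Schr\"odinger-type equation.

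There is essentially no obstacle at this step: the entire nontrivial calculation, namely verifying that the Calogero-type first-order terms $\tfrac{\eta}{N}\sum_{i\neq j}(E_i-E_j)^{-1}(\partial_i-\partial_j)$ and the drift term $-2\sum_k E_k\partial_k$ of ${\mathcal L}_{SD}$ are precisely absorbed into the harmonic potential $-\tfrac{N}{\eta}\sum_i E_i^2$ after gauge transformation by $\Delta(E)\,e^{-\frac{N}{2\eta}\sum_i E_i^2}$, has already been carried out in the proof of Proposition \ref{prop3_1}. The present theorem is a formal one-line consequence of that conjugation together with ${\mathcal L}_{SD} Z=0$, and the only observation worth making explicit is that the map $Z\mapsto\Psi$ is invertible, so the zero-energy Schr\"odinger equation for $\Psi$ is logically equivalent to the Schwinger-Dyson equation for $Z$.
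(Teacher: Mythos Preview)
Your proposal is correct and is precisely the argument the paper has in mind: the paper states that Theorem \ref{Thm3_2} ``follows from Proposition \ref{prop3_1} immediately,'' and your write-up simply spells out that immediate consequence by applying the conjugation identity to $Z$ and invoking ${\mathcal L}_{SD}Z=0$. There is nothing to add or correct.
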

This $N$-body harmonic oscillator system has
no interaction terms between the oscillators, so
it is a trivial quantum integrable system.

Theorem \ref{main_thm1} is proved as above. 
In the next section, we calculate the solution $\Psi (E, \eta ) $
more concretely and give another proof using it.

\section{From Partition Function 
to Zero-energy Solution} \label{sectPartitionFun_IZ}
A new expression of the zero-energy solution
of the $N$-body harmonic oscillator system is constructed
by a direct calculation of the partition function.\\

Let us carry out the integration of the
off-diagonal components of $\Phi$ in
the definition of the partition function
after the change of variables to $U(N)\times {\mathbb R}^N$.
We denote the eigenvalues of $\Phi$ by 
$x_1 , x_2 , \cdots , x_N$. 
By using a unitary matrix $U$, $\Phi$ is diagonalized
as $X=U \Phi U^\dagger$, where 
$X= diag(x_1,  x_2 , \cdots , x_N )$.
Then, 
\begin{align}
Z(E, \eta) &= \int_{h_N} d \Phi ~e^{-S'} \notag \\
&= \int_{{\mathbb R}^N} (\prod_{i=1}^N dx_i~  e^{-N V(x_i)}) 
(\prod_{l < k}(x_k -x_l)^2 )
\int_{U(N)} dU e^{-N Tr U E U^{\dagger} X^2 },
\end{align}
where $V(x):= \frac{\eta}{4}x^4$.

Let us use 
the Harish-Chandra-Itzykson-Zuber integral \cite{Itzykson:1979fi,T.Tao} for the unitary group $U(N)$ :
\begin{align}
\int_{U(N)}\exp\left(t\mathrm{tr}\left(AUBU^{\dagger}\right)\right)dU=&\tilde{c}_{N}\frac{\displaystyle\det_{1\leq i,j\leq N}\left(\exp\left(t\lambda_{i}(A)\lambda_{j}(B)\right)\right)}{t^{\frac{(N^{2}-N)}{2}}\displaystyle\Delta(\lambda(A))\displaystyle\Delta(\lambda(B))}.\label{Itzykson}
\end{align}
Here $A$ and $B$ are Hermitian matrices whose eigenvalues are denoted by $\lambda_{i}(A)$ and $\lambda_{i}(B)$ $(i=1,\cdots,N)$, respectively. 
$t$ is a non-zero complex parameter, $\displaystyle\Delta(\lambda(A)):=\prod_{1\leq i<j\leq N}(\lambda_{j}(A)-\lambda_{i}(A))$ is the Vandermonde determinant, and $\displaystyle \tilde{c}_{N}:=\left(\prod_{i=1}^{N-1}i!\right)\times\pi^{\frac{N(N-1)}{2}}$. $\left(\exp\left(t\lambda_{i}(A)\lambda_{j}(B)\right)\right)$ is the $N\times N$ matrix with the $i$-th row and the $j$-th column being $\exp\left(t\lambda_{i}(A)\lambda_{j}(B)\right)$.
After adapting this formula, the partition function is described by
\begin{align}
&Z(E, \eta) = 
\frac{c_N}{\Delta(E)}
\int_{{\mathbb R}^N} (\prod_{i=1}^N dx_i ~ e^{-N V(x_i)}) 
\left(\prod_{l < k}\frac{x_k -x_l}{x_k +x_l} \right)
\det_{1\leq i,j\leq N} \left( e^{-NE_i x_j^2 }
\right) \notag \\
&=\sum_{\sigma \in S_N}
\frac{c_N}{\Delta(E)}
\int_{{\mathbb R}^N} (\prod_{i=1}^N dx_i ~  e^{-N V(x_i)}) 
\left(\prod_{l < k}\frac{x_k -x_l}{x_k +x_l} \right)
 (-1)^\sigma
\prod_{j=1}^N e^{-N E_j x_{\sigma(j)}^2} ,
\label{integral_with_pf}
\end{align}
where $c_N = \tilde{c}_N (-1/ N)^{\frac{N^2-N}{2}}$
and $S_N$ denotes the symmetric group.
This integral representation (\ref{integral_with_pf})
should be regarded as a Cauchy principal value.
Consider the change of variables $x_i \mapsto x_{\sigma^{-1}(i)}$.
Note that the sign of $\prod_{l < k}(x_k -x_l)$ changes
as $(-1)^\sigma \prod_{l < k}(x_k -x_l)$
, and the following formula is obtained by this change of variables.
\begin{align}
Z(E, \eta) &=
\sum_{\sigma \in S_N}
\frac{c_N}{\Delta(E)}
\int_{{\mathbb R}^N} (\prod_{i=1}^N dx_i ~ e^{-N V(x_i)}) 
\left(\prod_{l < k}\frac{x_k -x_l}{x_k +x_l} \right)
\prod_{j=1}^N e^{-N E_j x_{j}^2} \notag \\
 &=\frac{N! c_N}{\Delta(E)}
\int_{{\mathbb R}^N} \left(\prod_{i=1}^N dx_i ~  e^{-N (V(x_i) + E_i x_i^2)}
\right) 
\left(\prod_{l < k}\frac{x_k -x_l}{x_k +x_l} \right).
\label{partition_function_1}
\end{align}

Then the zero-energy solution of (\ref{Schrodinger}) is obtained by
(\ref{correspPsi_Z}).
\begin{theorem}\label{thm4_1}
The function
\begin{align}
\Psi(E, \eta) =N! c_N 
\int_{{\mathbb R}^N} \left(\prod_{i=1}^N dx_i ~  
e^{-N (\frac{\eta}{4} x_i^4 + E_i x_i^2 + \frac{1}{2\eta} E_i^2)}
\right) 
\left(\prod_{l < k}\frac{x_k -x_l}{x_k +x_l} \right) 
\label{Psi_phi4}
\end{align}
satisfies the Schr\"odinger-type differential equation
(\ref{Schrodinger}).
\end{theorem}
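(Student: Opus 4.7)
The plan is to combine the explicit integral representation of $Z(E,\eta)$ derived in the paragraphs immediately preceding the statement of Theorem~\ref{thm4_1} with the definition $\Psi = e^{-\frac{N}{2\eta}\sum_{i} E_{i}^{2}}\Delta(E)\,Z(E,\eta)$ from (\ref{correspPsi_Z}), and then to invoke Theorem~\ref{Thm3_2}. Essentially all of the analytic work has in fact been carried out already: the passage to the eigenvalues of $\Phi$ via $\Phi = UXU^{\dagger}$, the application of the HCIZ formula~(\ref{Itzykson}) with $A=E$, $B=X^{2}$, $t=-N$ (so that $\Delta(\lambda(X^{2})) = \prod_{l<k}(x_k+x_l)(x_k-x_l)$ in the denominator cancels one copy of $\prod_{l<k}(x_k-x_l)^{2}$ from the Jacobian), and the relabelling $x_{i}\mapsto x_{\sigma^{-1}(i)}$ which collapses the sum over $S_{N}$ to $N!$ identical copies, together yield (\ref{partition_function_1}).

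The only remaining step is cosmetic. Multiplying (\ref{partition_function_1}) by $e^{-\frac{N}{2\eta}\sum_{i} E_{i}^{2}}\Delta(E)$ cancels the $\Delta(E)$ in the denominator and distributes the Gaussian factor into the product over $i$ as $\prod_{i} e^{-N E_{i}^{2}/(2\eta)}$, which is exactly the $E_i^2$-contribution appearing in the integrand of (\ref{Psi_phi4}). Hence the right-hand side of (\ref{Psi_phi4}) equals $\Psi(E,\eta)$, and Theorem~\ref{Thm3_2} (equivalently Theorem~\ref{main_thm1}) then delivers $\mathcal{H}_{HO}$ applied to (\ref{Psi_phi4}) equal to zero.

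The only delicate point — more an interpretive remark than a genuine obstacle — is the singularity of $1/(x_{k}+x_{l})$ on the hyperplanes $\{x_{k}+x_{l}=0\}$, which forces (\ref{Psi_phi4}) to be read as a Cauchy principal value, as already flagged after (\ref{integral_with_pf}). One does not actually have to commute $\mathcal{H}_{HO}$ through this principal-value integral, however: the identity between (\ref{Psi_phi4}) and $\Psi$ holds before any $E$-differentiation, so Theorem~\ref{Thm3_2} transports through the identity without requiring any further analytic input.
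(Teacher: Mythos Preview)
Your argument is correct, and the paper itself acknowledges this route right after stating the theorem: ``Since this fact follows from Theorem~\ref{Thm3_2}, there is no need to prove it.'' You have written out that deduction cleanly.

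However, the paper then deliberately supplies a \emph{direct} proof as an independent confirmation, and that is the content it places under the \texttt{proof} environment. The paper first establishes Lemma~\ref{lemma3},
\[
\int_{\mathbb{R}^N}\Bigl(\prod_i dx_i\,e^{-f(x,E)}\Bigr)\Bigl(\prod_{l<k}\tfrac{x_k-x_l}{x_k+x_l}\Bigr)\sum_{j}\bigl(\eta N x_j^4+2NE_j x_j^2-1\bigr)=0,
\]
by integrating by parts in $x_j$ (i.e.\ writing $\int \tfrac{d}{dx_j}\bigl\{x_j e^{-f}\prod\tfrac{x_k-x_l}{x_k+x_l}\bigr\}=0$) and observing that the cross term $\sum_{m\neq n}\tfrac{2x_mx_n}{(x_m-x_n)(x_m+x_n)}$ is antisymmetric and hence vanishes. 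With this lemma in hand, one differentiates (\ref{Psi_phi4}) twice in $E_i$ under the integral sign, obtaining the factor $\sum_i\bigl(\eta N x_i^4+2NE_ix_i^2-1+\tfrac{N}{\eta}E_i^2\bigr)$; the first three terms drop out by the lemma, leaving exactly $\tfrac{N}{\eta}\sum_i E_i^2\,\Psi$.

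The trade-off: your approach is shorter and recycles Theorem~\ref{Thm3_2}, but it inherits the dependence on the operator conjugation identity of Proposition~\ref{prop3_1} and the Schwinger--Dyson derivation of $\mathcal{L}_{SD}$. The paper's direct computation is self-contained on the integral side and serves as a genuine cross-check that the HCIZ manipulation and the operator algebra are mutually consistent.
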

Since this fact follows from Theorem \ref{Thm3_2}, there is no need to prove it, but it would be worthwhile to show the differential equation (\ref{Schrodinger}) directly from expression 
(\ref{Psi_phi4}) as a confirmation.\\

At first, we prove the following Lemma:
\begin{lemma}\label{lemma3}
\begin{align} \label{Lemma3_eq}
\int_{{\mathbb R}^N} \left(\prod_{i=1}^N dx_i ~  
e^{-N (\frac{\eta}{4} x_i^4 + E_i x_i^2 + \frac{1}{2\eta} E_i^2)}
\right) 
\left(\prod_{l < k}\frac{x_k -x_l}{x_k +x_l} \right)
\sum_{j=1}^N
(\eta N x_j^4 +2N E_j x_j^2 -1 )
= 0 .
\end{align}
\end{lemma}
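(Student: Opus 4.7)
The plan is to prove the lemma by integrating by parts in the $x_j$ variables. The factor $\sum_j(\eta N x_j^4 + 2 N E_j x_j^2 - 1)$ combined with the exponential weight is a total divergence, and that is what I would exploit.

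First I would use the single-variable identity
$$\frac{\partial}{\partial x_j}\Big[\,x_j\, e^{-N(\frac{\eta}{4}x_j^4 + E_j x_j^2)}\Big] \;=\; \big(1 - \eta N x_j^4 - 2 N E_j x_j^2\big)\, e^{-N(\frac{\eta}{4}x_j^4 + E_j x_j^2)},$$
so that, writing $F := \prod_{l<k}\frac{x_k - x_l}{x_k + x_l}$, the integrand of (\ref{Lemma3_eq}) becomes
$$-\,e^{-\frac{N}{2\eta}\sum_i E_i^2}\, F\,\sum_{j=1}^N\frac{\partial}{\partial x_j}\!\Big[x_j\, e^{-N\sum_i(\frac{\eta}{4}x_i^4+E_i x_i^2)}\Big].$$
Integration by parts in each $x_j$ (with the quartic Gaussian decay killing the boundary terms at $\pm\infty$) transforms the left-hand side of (\ref{Lemma3_eq}) into
$$e^{-\frac{N}{2\eta}\sum_i E_i^2}\int_{\mathbb{R}^N}\Big(\prod_i dx_i\Big)\, e^{-N\sum_i(\frac{\eta}{4}x_i^4+E_i x_i^2)}\,\sum_{j=1}^N x_j\,\frac{\partial F}{\partial x_j}.$$

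The crucial observation is that $F$ is homogeneous of degree $0$: numerator and denominator are each products of $\binom{N}{2}$ linear forms, so $F(\lambda x)=F(x)$ for every $\lambda>0$. Euler's relation thus yields $\sum_j x_j\,\partial_{x_j}F = 0$ pointwise off the singular locus. Equivalently, from $\partial_{x_j}\log F = \sum_{l\neq j}\bigl(\frac{1}{x_j - x_l} - \frac{1}{x_j + x_l}\bigr)$, the symmetrization $(j,l)\leftrightarrow(l,j)$ gives
$$\sum_{j\neq l}\frac{x_j}{x_j - x_l}\;=\;\sum_{j\neq l}\frac{x_j}{x_j + x_l}\;=\;\binom{N}{2},$$
and the two sums cancel. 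Hence the post-IBP integrand vanishes identically, which proves the lemma.

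The main obstacle to making this rigorous is that $F$ and $\partial_{x_j} F$ possess simple poles on the hyperplanes $\{x_k + x_l = 0\}$, so the integral of Theorem \ref{thm4_1} is only defined as a Cauchy principal value. To justify the integration by parts, I would excise symmetric $\varepsilon$-neighborhoods of each such hyperplane, carry out the IBP on the regularized domain, and check that the extra flux picked up on the two-sided boundaries is odd in the transverse coordinate $x_k + x_l$, so these contributions cancel in the symmetric $\varepsilon\to 0$ limit. The explicit symmetric structure of the factor $\frac{x_k - x_l}{x_k + x_l}$ in $F$ is precisely what ensures this cancellation.
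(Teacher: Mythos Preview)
Your argument is correct and is essentially the paper's own: both start from the total-derivative identity $\int\sum_j\partial_{x_j}\bigl(x_j\,e^{-f}\,F\bigr)=0$ and then show that $\sum_j x_j\,\partial_{x_j}F=0$. You phrase the latter via Euler's relation for the degree-zero homogeneous function $F$; the paper computes $\sum_j x_j\,\partial_{x_j}\log F=\sum_{m\neq n}\frac{2x_m x_n}{(x_m-x_n)(x_m+x_n)}$ and kills it by the antisymmetry $(m,n)\leftrightarrow(n,m)$, which is the same content. The one tactical difference worth noting is how the singular hyperplanes $\{x_k+x_l=0\}$ are handled: the paper writes the total derivative \emph{before} applying the Harish-Chandra--Itzykson--Zuber formula, i.e.\ on the smooth integrand $x_j\,e^{-f}\,\Delta(x)^2\!\int_{U(N)}\!dU\,e^{-N\,\mathrm{Tr}\,UEU^{\dagger}X^2}$, so integration by parts is unproblematic, and only afterwards passes ``formally'' to the expression with $F$. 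Your route works directly with $F$ and instead appeals to a symmetric $\varepsilon$-excision; either maneuver closes the same gap.
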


\begin{proof}
For simplicity, 
$\displaystyle \sum_i^N
N (\frac{\eta}{4} x_i^4 + E_i x_i^2 + \frac{1}{2\eta} E_i^2)$ will be abbreviated as $f(x,E)$. From the following identity:
\begin{align}
& \int_{{\mathbb R}^N} (\prod_{i=1}^N dx_i ) 
~ \frac{d}{dx_j} 
\Big\{
x_j
e^{-f(x,E)} 
(\prod_{l < k}(x_k -x_l)^2 )
\int_{U(N)} dU e^{-N Tr U E U^{\dagger} X^2 }
\Big\} =0,
\end{align}
we obtain the formal expression:
\begin{align*}
 \sum_{j=1}^N
\int_{{\mathbb R}^N} \left(\prod_{i=1}^N dx_i  \right)
\frac{d}{d x_j}\left\{ x_j
e^{-f(x,E)}
\left(\prod_{l < k}\frac{x_k -x_l}{x_k +x_l} \right)
\right\}=0.
\end{align*}
Then we get
\begin{align*}
\int_{{\mathbb R}^N} \left(\prod_{i=1}^N dx_i  \right) 
&e^{-f(x,E)}
\left(\prod_{l < k}\frac{x_k -x_l}{x_k +x_l} \right) \notag\\
&\times 
\left\{ \sum_{j=1}^N
(\eta N x_j^4 +2N E_j x_j^2 -1 )
-\sum_{m \neq n} \frac{2x_m x_n}{(x_m -x_n)(x_m +x_n)} 
\right\}
= 0.
\end{align*}
From the identity
\begin{align*}
\sum_{m \neq n} \frac{2x_m x_n}{(x_m -x_n)(x_m +x_n)} 
=\sum_{m \neq n} \Big( 
\frac{x_m x_n}{(x_m -x_n)(x_m +x_n)} 
+\frac{x_n x_m}{(x_n -x_m)(x_n +x_m)}
\Big)
=0 ,
\end{align*}
(\ref{Lemma3_eq}) is obtained.
\end{proof}

Using Lemma \ref{lemma3},
let us prove Theorem \ref{thm4_1} by direct
calculations.
\begin{proof}
From (\ref{Psi_phi4}),
\begin{align}
&\frac{\eta}{N} \sum_i^N \left(
\frac{\partial}{\partial E_i} \right)^2
\Psi (E, \eta ) \notag \\
&=N! c_N 
\int_{{\mathbb R}^N} \left(\prod_{j=1}^N dx_j  \right) 
e^{-f(x,E)}
\left(\prod_{l < k}\frac{x_k -x_l}{x_k +x_l} \right)
\left\{ \sum_{i=1}^N
(\eta N x_i^4 +2N E_i x_i^2 -1 + \frac{N}{\eta}E_i^2) \right\}
\notag \\
&= \frac{N}{\eta} \sum_i^N E_i^2 \Psi (E, \eta ) .
\end{align}
For the last equality, we used Lemma \ref{lemma3}.
\end{proof}
$\Psi (E, \eta )$ can also be expressed using Pfaffian. 
It is described in Appendix \ref{apenB}.\\
\bigskip

As described above, we have also directly proved that 
the function $\Psi (E, \eta )$ obtained from 
the partition function of the matrix model 
satisfies the Schr\"odinger-type differential equation 
for the $N$-body harmonic oscillator 
system without  interactions.\\

\section{Discussions and Remarks for $N=1$}\label{appenC}
The matrix model studied in this paper is related to a
renormalizable scalar $\Phi^4$ theory on Moyal space \cite{Grosse:2012uv}
in the large $N$ limit.
There are mainly two approaches to study the question of integrability of this matrix model:
One relies on the model, where one replaces the $\Phi^4$ interaction by a constant times $\Phi^3$. 
This gives the Kontsevich model, for which it is known, that the logarithm of the partition function 
is the $\tau$ function for the KdV hierarchy and fulfills a Hirota bilinear equation\cite{Witten:1990hr,Kontsevich:1992ti,Itzykson:1992ya,Harnad2021}. 
Another approach 
follows topological recursion. 
While the Kontsevich model follows topological recursion, 
it turned out, that the $\Phi^4$ model follows the more sophisticated blobbed topological recursion,
(proven for genus one and two). \cite{Branahl:2021slr,Branahl:2020yru,Hock:2021tbl}. 
Due to these complications, it was unexpected, 
to obtain such a simple answer. 
This $N$-body harmonic oscillator system is known as an integrable system
and this system has been studied for a long time.
See for example \cite{Olshanetsky,Semay:2011zz,Willemyns:2021rgm} and references therein.
Note that the solution required by the Schr\"odinger-type equation 
(\ref{Schrodinger}) is a zero-energy solution, which is different 
from the well-known harmonic oscillator solutions by using 
Hermite polynomials for non-zero energy solutions.
In particular, the case $N=1$ corresponds to what is called the Weber equation.
In the following, we will consider the case $N=1$ as a particularly simplest case 
and see how $\Psi (E, \eta)$ corresponds to a solution to the Weber equation.\\

Introducing new variables $\displaystyle u_i := \sqrt{\frac{N}{\eta}}E_i$,
the Schr\"odinger-type equation (\ref{Schrodinger}) is deformed into
\begin{align}
\sum_{i=1}^N \left( \frac{\partial}{\partial u_i} \right)^2 y(u) =  \sum_{i=1}^N u_i^2 y(u) .
\end{align}
So the $N=1$ case, this is a kind of the Weber equation\cite{DARWIN}:
\begin{align}
 y'' (u) =  u^2 y(u) . \label{weber}
\end{align}
The series solution of this Weber equation is given as follows.
For $y(u)= \sum_{n=0}^{\infty} a_n u^n$, (\ref{weber}) requires
\begin{align}
&a_{4n+2}=a_{4n+3}=0, \notag \\
&a_{4n} = \frac{1}{4n(4n-1)}\cdot \frac{1}{(4n-4)(4n-5)} \cdots \frac{1}{4\cdot 3} a_0 \\
&a_{4n+1} = \frac{1}{(4n+1)4n}\cdot \frac{1}{(4n-3)(4n-4)} \cdots \frac{1}{5\cdot 4} a_1 .
\notag
\end{align}
So the boundary conditions are given by $y(0)=a_0$ and $y'(0) =a_1$.
For $N=1$ case, the partition function is
\begin{align}
Z(E, \eta):=& \int_{-\infty}^{\infty} dx ~e^{-Ex^2 - \frac{\eta}{4}x^4}
=\int_{-\infty}^{\infty} dx ~e^{-\sqrt{\eta} u x^2 - \frac{\eta}{4} x^4}=: Z(u,\eta),
\end{align}
and using this $Z(u,\eta),$ (\ref{Psi_def_3_8}) implies that the solution of (\ref{weber}) is given as
\begin{align}
\Psi(u) := e^{-\frac{E^2}{2\eta}} Z(u,\eta) =   e^{-\frac{u^2}{2}} \int_{-\infty}^{\infty} dx~ e^{-\sqrt{\eta} u x^2 - \frac{\eta}{4} x^4} .
\end{align}
Indeed, we can prove that $\Psi(u)$ satisfies (\ref{weber}) as follows.
As similar to the proof for 
Lemma \ref{lemma3}, 
$$0= \int_{-\infty}^{\infty} dx \frac{d}{dx}( x e^{-f(u)}) =
\int_{-\infty}^{\infty} dx (1- 2\sqrt{\eta} u x^2 - \eta x^4 ) e^{-f(u)},$$
where $-f(u)= -\sqrt{\eta} u x^2 - \frac{\eta}{4} x^4 -\frac{u^2}{2}$.
Using this formula, (\ref{weber}) is derived:
\begin{align*}
\left( \frac{d}{d u} \right)^2 \Psi(u) 
= \int_{-\infty}^{\infty} dx (u^2 + 2\sqrt{\eta} u x^2 + \eta x^4 -1 )
e^{-f(u)} = u^2 \Psi(u) .
\end{align*}
Furthermore, for $u>0$ , 
by using modified Bessel function of the second kind $K_{\frac{1}{4}}(u)$,
$\Psi(u) $ can also be written as
\begin{align*}
\Psi(u) = \frac{1}{\eta^{\frac{1}{4}}} \sqrt{u} K_{\frac{1}{4}}
\big( \frac{u^2}{2} \big).
\end{align*}
We find that the boundary condition for this solution is
required as
\begin{align*}
a_0 &= \Psi(0) =Z(0,\eta) = \int_{-\infty}^{\infty} dx e^{- \frac{\eta}{4} x^4}
= \frac{\Gamma (\frac{1}{4})}{\sqrt{2}\eta^{\frac{1}{4}} } ,\\
a_1 &= \Psi' (0) =-\sqrt{\eta} \int_{-\infty}^{\infty} dx~ x^2 e^{- \frac{\eta}{4} x^4}
=- \frac{\sqrt{2} \Gamma (\frac{3}{4})}{\eta^{\frac{1}{4}} }.
\end{align*}
Thus, in the case of $N=1$, the results are derived using known special functions.

\bigskip

{\bf Acknowledgement}\\
{
Authors are grateful to Raimar Wulkenhaar and Naoyuki Kanomata for lots of meaningful discussions at ESI.
We also thank Kenji Iohara, Taro Kimura, and Ryuichi Nakayama 
for sharing various techniques and information with us.
A.S. was supported by JSPS KAKENHI Grant Number 21K03258. This study was supported by Erwin Schr\"odinger International Institute for Mathematics and Physics (ESI) through the project ``Research in Teams Project Integrability".
}

\appendix

\section{Appendix : Perturbative Check for Schwinger-Dyson Equation}
\label{appenA}
In Appendix \ref{appenA},  we check that the partition function given in Section \ref{sectPartitionFun_IZ}
satisfies the Schwinger-Dyson equation (\ref{SD_2})
by using 
perturbation theory.\\

For simplicity we use the action (\ref{act1}) in the following perturbative calculations. 
We use the fact that the theory itself is equivalent even if 
the action is changed from (\ref{action_S}) to (\ref{act1}).
Then the partition function is written as
$$
Z(E, \eta)= \int_{h_N} d \Phi ~e^{-S'} .
$$
Note that
\begin{align}
\frac{\eta}{N} \sum_{i=1}^N
\left(
\frac{\partial  }{\partial E_{i}} 
\right)^2 Z(E, \eta) &= \eta N 
\sum_{i,j,k}^N \langle \Phi_{ij}\Phi_{ji}
\Phi_{ik}\Phi_{ki} \rangle , \label{maru1}\\
\frac{\eta}{N} \sum_{i\neq j}
\frac{1}{E_i -E_j}\left(
\frac{\partial  }{\partial E_{i}} -\frac{\partial  }{\partial E_{j}}
\right) Z(E, \eta) &=
-\eta \sum_{i\neq j} \frac{1}{E_i -E_j}
\sum_{k=1}^N \left(
\langle  \Phi_{ik}\Phi_{ki} \rangle - \langle  \Phi_{jk}\Phi_{kj} \rangle
\right)    , \label{maru2} \\
-2 \sum_k E_k \frac{\partial  }{\partial E_{k}} 
Z(E, \eta) &=
2N  \sum_{k,j} E_k \langle  \Phi_{kj}\Phi_{jk} \rangle .
\label{maru3}
\end{align}
The Schwinger-Dyson equation (\ref{SD_2}) is expressed as
\begin{align}
(\ref{maru1})+(\ref{maru2})+(\ref{maru3}) = N^2 Z(E, \eta) .
\label{SD_p}
\end{align}
So, we verify this equation (\ref{SD_p}) by perturbation
expansion
up to first order in $\eta$.
We calculate the three 
expectation values 
(\ref{maru1}), (\ref{maru2}), and (\ref{maru3})
by using Feynman rules.\\

We shall now summarize the Feynman rules.
We denote the free model action by
\begin{align}
S_f&= N~ Tr E \Phi^2 
= N ( 
\sum_{i,j}^N  E_{i}\Phi_{ij}\Phi_{ji} ).
\end{align}
Introducing an $N\times N$ Hermitian matrix $J$
as an external field, the partition function for the
free theory is defined as
\begin{align}
Z_f [J] := \int_{h_N} d\Phi ~
e^{-N  Tr E \Phi^2 +Tr J\Phi } .
\end{align}
After Gaussian integral, this is expressed as
\begin{align}
Z_f [J] := Z_f e^{\frac{1}{2N} \sum_{i,j} J_{ij} J_{ji} \frac{1}{E_i + E_j}}, 
\label{FreePart}
\end{align}
where 
\begin{align}
Z_f := c \frac{1}{\prod_i (E_i)^{\frac{1}{2}} \cdot \prod_{l>k}(E_l +E_k)} ,
\quad  ( c= (\sqrt{2\pi /N})^{N^2} ).
\end{align}
We denote the vacuum expectation value of the free action $S_f$
by $\langle O \rangle_f := \int_{h_N} d\Phi ~ O e^{-S_f}$.
Note that we do not normalize this vacuum expectation value,
i.e. $\langle 1 \rangle_f = Z_f $.
From (\ref{FreePart}), the propagator is given by
\begin{align}
\langle \Phi_{ab}\Phi_{cd} \rangle_f 
= Z_f \frac{1}{N} \frac{\delta_{ad} \delta_{bc}}{E_a + E_b}
=: Z_f~ \lower1.3ex\hbox{\propagator} ,
\end{align}
the interaction is given as 
\begin{align}
 \lower1.9ex\hbox{\vertexs}:= -\frac{\eta}{4} \times 
 \mbox{symmetry factor},
\end{align}
and each loop corresponds to a sum $\sum_n$.
The Feynman diagrams contributing to the $2$-point function 
$\langle \Phi_{ij}\Phi_{kl} \rangle$
to zeroth and first order in $\eta$ are shown in Figure \ref{Feynman}.
\begin{figure}[h]
\begin{center}
\includegraphics[width=0.6 \textwidth]{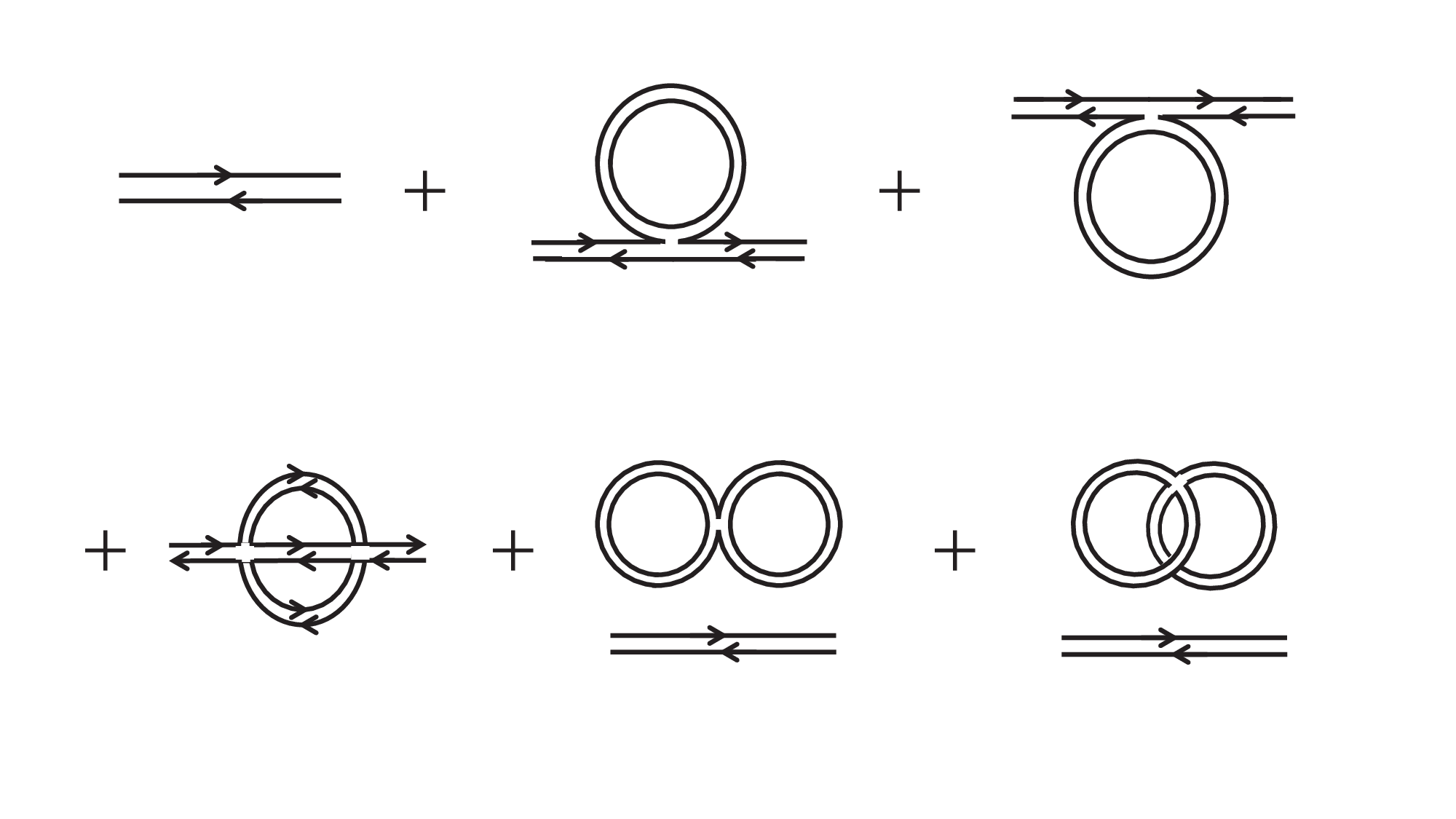}
\caption{Feynman diagrams for $2$-point function}
\label{Feynman}
\end{center}
\end{figure}
If the terms are written in the same order as each graph in Figure \ref{Feynman}, they contribute as follows.
\begin{align}
&\langle \Phi_{ij}\Phi_{kl} \rangle = 
 Z_f\Big\{
\frac{1}{N}\frac{\delta_{kj} \delta_{il}}{E_i + E_j} \notag \\
&+\frac{1}{N^3}\frac{\delta_{kj} \delta_{il}}{(E_i + E_j)^2}
\sum_n \frac{1}{E_i + E_n}
\frac{-\eta N}{4}\times 4
+\frac{1}{N^3}\frac{\delta_{kj} \delta_{il}}{(E_i + E_j)^2}
\sum_n \frac{1}{E_j + E_n}
\frac{-\eta N}{4}\times 4 \notag \\
&+\frac{1}{N^3}\frac{\delta_{ij} \delta_{kl}}{2E_i 2E_k(E_i + E_k)}
\frac{-\eta N}{4}\times 4
+\frac{1}{N}\frac{\delta_{kj} \delta_{il}}{E_i + E_j} \times B \Big\}
+O(\eta^2),
\end{align}
where $B$ represents the terms from bubble graphs in Figure \ref{Feynman}:
\begin{align}
B := -\frac{1}{N}\eta \Big(
\frac{1}{2}\sum_{m,n,s} \frac{1}{(E_m + E_n)(E_m + E_s)}
+\frac{1}{16} \sum_m \frac{1}{E_m^2}
\Big).
\end{align}
(See also (\ref{NextOrderZ}).)
This equation can be rearranged as follows.
\begin{align}
\langle \Phi_{ij}\Phi_{kl} \rangle =  
& \frac{Z_f}{N}\frac{\delta_{kj} \delta_{il}}{E_i + E_j}\Big\{
(1+B) 
- \frac{\eta}{N}\frac{1}{(E_i + E_j)}
\sum_n \left( \frac{1}{E_i + E_n} + \frac{1}{E_j + E_n} \right) \Big\}
\notag\\
&+\frac{Z_f}{N^2}
\frac{ -\eta \delta_{ij} \delta_{kl}}{4E_i E_k(E_i + E_k)} +O(\eta^2) .
\label{2pointfunction_general}
\end{align}

Now that we are ready to calculate contributuions by using Feynman rules, we shall immediately carry out
 the calculations of the three expectation values 
(\ref{maru1}), (\ref{maru2}), and (\ref{maru3}).
At first, we estimate (\ref{maru1}).
By using Wick contraction, we get
\begin{align}
&\eta N 
\sum_{i,j,k}^N \langle \Phi_{ij}\Phi_{ji}
\Phi_{ik}\Phi_{ki} \rangle 
 = \notag \\
& \eta \frac{N}{Z_f} 
\sum_{i,j,k}^N \{ \langle \Phi_{ij}\Phi_{ji}\rangle 
\langle \Phi_{ik}\Phi_{ki}\rangle 
+ \langle \Phi_{ij}\Phi_{ik}\rangle 
\langle \Phi_{ji}\Phi_{ki}\rangle
+\langle \Phi_{ij}\Phi_{ki}\rangle 
\langle \Phi_{ji}\Phi_{ik}\rangle  \}. \label{maru1_2}
\end{align}
Substituting (\ref{2pointfunction_general}) for (\ref{maru1_2}),
finally we obtain
\begin{align}
\eta N 
\sum_{i,j,k}^N \langle \Phi_{ij}\Phi_{ji}
\Phi_{ik}\Phi_{ki} \rangle 
 = \frac{\eta Z_f  }{N}
 \Big\{
 \sum_{i,j,k}^N
  \frac{1}{ (E_i + E_j)(E_i + E_k)}
 + \sum_{i=1}^N 
  \frac{1}{4 E_i^2 }
+ \sum_{i, j}^N 
  \frac{1}{ (E_i +E_j)^2 }
\Big\} +O(\eta^2 ) . \label{maru1_3}
\end{align}
Next, after substituting (\ref{2pointfunction_general}),
(\ref{maru2}) is given as
\begin{align}
-\eta \sum_{i\neq j} \frac{1}{E_i -E_j}
\sum_{k=1}^N \left(
\langle  \Phi_{ik}\Phi_{ki} \rangle - \langle  \Phi_{jk}\Phi_{kj} \rangle
\right)  = 
\frac{\eta Z_f  }{N}
\sum_{i \neq j} \sum_{k=1}^N 
  \frac{1}{ (E_i + E_k)(E_j + E_k)}
+O(\eta^2 ) . \label{maru2_2}
\end{align}
Similarly, (\ref{maru3}) is given as
\begin{align}
2N  \sum_{k,j} E_k \langle  \Phi_{kj}\Phi_{jk} \rangle 
=& 2Z_f \sum_{k, j}
\frac{E_k}{ E_k + E_j}
\left\{
1+B - \frac{\eta}{N} \frac{1}{ E_k + E_j}
\sum_{n=1}^N \Big(
 \frac{1}{ E_k + E_n} +  \frac{1}{ E_j + E_n}
\Big)
\right\} \notag \\
& 
- \frac{\eta Z_f}{4 N} \sum_{k=1}^N \frac{1}{ E_k^2}  +O(\eta^2 ) .
\label{maru3_2}
\end{align}
\bigskip

For the right hand side of  (\ref{SD_p}), 
next we calculate 
\begin{align}
N^2 Z(E, \eta) =
N^2 \sum_{k=0}^{\infty} \frac{1}{k!}
\Big( -\frac{ N \eta}{4} \Big)^k
\langle (Tr \Phi^4 )^k \rangle_f .
\end{align}
$\eta^0$-term is $N^2 \langle 1 \rangle_f = N^2 Z_f$.
$\eta^1$-term is given by 
$ -\frac{ N^3 \eta}{4} 
\langle Tr \Phi^4 \rangle_f $.
This is the same calculation for $B$:
\begin{align}
&-\frac{ N^3 \eta}{4} 
\langle Tr \Phi^4 \rangle_f \notag \\
&= -\frac{ N^3 \eta}{4 Z_f} 
\sum_{i,j,k,l}^N \{ \langle \Phi_{ij}\Phi_{jk}\rangle_f 
\langle \Phi_{kl}\Phi_{li}\rangle_f 
+ \langle \Phi_{ij}\Phi_{kl}\rangle_f 
\langle \Phi_{jk}\Phi_{li}\rangle_f
+\langle \Phi_{ij}\Phi_{li}\rangle_f 
\langle \Phi_{jk}\Phi_{kl}\rangle_f  \}. 
\notag \\
&=-\frac{ N \eta Z_f}{4} 
\sum_{i,j,k,l}^N
\left(
 \frac{\delta_{ik}}{ (E_i + E_j)(E_k + E_l)}
 +\frac{\delta_{jk} \delta_{il} \delta_{kl} \delta_{ji} }{ (E_i + E_j)(E_j + E_k)}
 + \frac{\delta_{jl}}{ (E_i + E_j)(E_j + E_k)}
\right) \notag \\
&= N^2  Z_f B
\label{NextOrderZ}
\end{align}

\bigskip
Now is the time to check the Schwinger-Dyson equation (\ref{SD_p}).
(\ref{maru1}), (\ref{maru2}), and (\ref{maru3})
are calculated as 
(\ref{maru1_3}), (\ref{maru2_2}), and (\ref{maru3_2}),
respectively.
Note that $B$ is the term proportional to $\eta$.
The $\eta^0$-term in
$(\ref{maru1})+ (\ref{maru2})+ (\ref{maru3}) $
exists only in (\ref{maru3_2}):
\begin{align}
2Z_f \sum_{k, j}
\frac{E_k}{ E_k + E_j}=
Z_f \sum_{k, j}\left(
\frac{E_k}{ E_k + E_j}
+\frac{E_j}{ E_k + E_j}
\right) = N^2 Z_f. \label{tric}
\end{align}
On the other hand, $\eta^0$-term from
$N^2 Z(E, \eta)$, that is 
the right hand side of (\ref{SD_p}), 
is $N^2 Z_f$.
Thus, it is confirmed that equation (\ref{SD_p}) 
holds in zeroth order of $\eta$.

Next order terms in 
$(\ref{maru1_3}) + (\ref{maru2_2}) + (\ref{maru3_2})$
are summarized as
\begin{align}
\frac{\eta Z_f  }{N}
 \Big\{
 2 \sum_{i,j,k}^N
  \frac{1}{ (E_i + E_j)(E_i + E_k)}
-2 \sum_{k, j}
\frac{E_k}{ (E_k + E_j)^2}
\sum_{n=1}^N \Big(
 \frac{1}{ E_k + E_n} +  \frac{1}{ E_j + E_n}
\Big)
\Big\}
+ Z_f N^2 B . \label{LHS_1st}
\end{align}
Here we use the same way in (\ref{tric}) 
to obtain the last term.
From the following observation,
\begin{align}
2 \sum_{k, j}
\frac{E_k}{ (E_k + E_j)^2}
\sum_{n=1}^N \Big(
 \frac{1}{ E_k + E_n} +  \frac{1}{ E_j + E_n}
\Big) &=
2 \sum_{i,j,k}
 \frac{1}{(E_i + E_j) (E_i + E_k)} ,
\end{align}
we find (\ref{LHS_1st}) is given by
\begin{align}
Z_f N^2 B .
\end{align}
On the other hand, $\eta$-linear terms from
$N^2 Z(E, \eta)$ are given by 
(\ref{NextOrderZ}),
then we checked that equation (\ref{SD_p}) 
holds in first order of $\eta$.

\section{Appendix : Pfaffian Expression}\label{apenB}
We can rewrite $Z(E, \eta)$ and $\Psi(E, \eta)$
  by using Pfaffian.
For simplicity, we consider only $2N$ case of (\ref{partition_function_1}).
\begin{align}
Z(E, \eta) 
 &=\frac{(2N)! c_{2N}}{\Delta(E)}
\int_{{\mathbb R}^{2N}} \left(\prod_{i=1}^{2N} dx_i  
e^{-2N (V(x_i) + E_i x_i^2)}
\right) 
\left(\prod_{l < k}\frac{x_k -x_l}{x_k +x_l} \right)
\\
&= \frac{(2N)! c_{2N}}{\Delta(E)}
\int_{{\mathbb R}^{2N}}
\left(\prod_{k=1}^{2N} dx_k  
\right) 
\det_{i,j} \left(\delta_{ij} e^{-2N (V(x_i) + E_i x_i^2)} \right)
\text{pf}_{k,l} \left(\frac{x_k -x_l}{x_k +x_l} \right)
. \notag 
\end{align}
As noted in Section \ref{sectPartitionFun_IZ},
the integral shall be treated as a principal value integral.
We use $C= ((2N)!)^2 c_{2N}$ below.
After applying Bruijn's formula \cite{Bruijn:1995}, this is obtained
by
\begin{align}
Z(E, \eta) 
 &= \frac{C}{\Delta(E)}
 \text{pf}_{i,j}
 \langle e^{-2N(V(x_i)+E_i x_i^2)} \Big| 
 \left(\frac{x_k -x_l}{x_k +x_l} \right)
 \Big| e^{-2N(V(x_j)+E_j x_j^2)} \rangle ,
 \end{align}
where 
\begin{align}
& \langle e^{-2N(V(x_i)+E_i x_i^2)} \Big| 
 \left(\frac{x_k -x_l}{x_k +x_l} \right)
 \Big| e^{-2N(V(x_j)+E_j x_j^2)} \rangle \notag \\
 &:= \int_{{\mathbb R}^{2}} dx_i dx_j
 \left( 
 \frac{x_i -x_j}{x_i +x_j}\right)
 e^{-2N(V(x_i)+E_i x_i^2 + V(x_j)+E_j x_j^2)} .
\end{align}
The justification for this process is discussed in \cite{Borot:2023thu}.
Therefore the zero-energy solution
is given by
\begin{align}
\Psi(E, \eta) 
 &= {C}
 e^{-\frac{N}{2\eta} \sum_i E_i^2 }
 \text{pf}_{i,j}
 \langle e^{-2N(V(x_i)+E_i x_i^2)} \Big| 
 \left(\frac{x_k -x_l}{x_k +x_l} \right)
 \Big| e^{-2N(V(x_j)+E_j x_j^2)} \rangle .
\end{align}




\begin{thebibliography}{99}





\bibitem{Borot:2023thu}
G.~Borot and R.~Wulkenhaar,
``A short note on BKP for the Kontsevich matrix model with arbitrary potential,''
[arXiv:2306.01501 [math-ph]].


\bibitem{Branahl:2021slr}
J.~Branahl, H.~Grosse, A.~Hock and R.~Wulkenhaar,
``From scalar fields on quantum spaces to blobbed topological recursion,''  
J. Phys. A \textbf{55}, no.42, 423001 (2022)
doi:10.1088/1751-8121/ac9260
[arXiv:2110.11789 [hep-th]].


\bibitem{Branahl:2020yru}
J.~Branahl, A.~Hock and R.~Wulkenhaar, 
``Blobbed Topological Recursion of the Quartic Kontsevich Model I: Loop Equations and Conjectures,'' Commun. Math. Phys. (2022). 
https://doi.org/10.1007/s00220-022-04392-z.



\bibitem{Bruijn:1995}
de, N. G. ~Bruijn,  ``On some multiple integrals involving determinants,'' Journal of the Indian Mathematical
Society. New Series, 19, 133-151 (1955)

\bibitem{DARWIN}
C. G. Darwin, ``On Weber's Function,'' 
The Quarterly Journal of Mechanics and Applied Mathematics, Vol 2, Issue 3 (1949)
311–320, https://doi.org/10.1093/qjmam/2.3.311











\bibitem{Grosse:2006tc}
H.~Grosse and H.~Steinacker,
``Exact renormalization of a noncommutative $\phi^{3}$ model in 6 dimensions,''
Adv. Theor. Math. Phys. \textbf{12}, no.3, 605-639 (2008) 
doi:10.4310/ATMP.2008.v12.n3.a4
[arXiv:hep-th/0607235 [hep-th]].


\bibitem{Grosse:2012uv}
H.~Grosse and R.~Wulkenhaar,
``Self-Dual Noncommutative $\phi^4$ -Theory in Four Dimensions is a Non-Perturbatively Solvable and Non-Trivial Quantum Field Theory,''
Commun. Math. Phys. \textbf{329}, 1069-1130 (2014)
doi:10.1007/s00220-014-1906-3
[arXiv:1205.0465 [math-ph]].






\bibitem{Harnad2021}
J.~Harnad and F.~ Balogh, (2021) 
{\it Tau Functions and their Applications} (Cambridge Monographs on Mathematical Physics). Cambridge: Cambridge University Press. doi:10.1017/9781108610902




\bibitem{Hock:2021tbl}
A.~Hock and R.~Wulkenhaar,
``Blobbed topological recursion of the quartic Kontsevich model II: Genus=0,''
[arXiv:2103.13271 [math-ph]].


\bibitem{Itzykson:1979fi}
C.~Itzykson and J.~B.~Zuber,
``The Planar Approximation. 2.,''
J. Math. Phys. \textbf{21}, 411 (1980)
doi:10.1063/1.524438

\bibitem{Itzykson:1992ya}
C.~Itzykson and J.~B.~Zuber,
``Combinatorics of the modular group. 2. The Kontsevich integrals,''
Int. J. Mod. Phys. A \textbf{7} , 5661-5705 (1992)
doi:10.1142/S0217751X92002581
[arXiv:hep-th/9201001 [hep-th]].


\bibitem{Kimura}
T.~ Kimura, (2021), {\it Mathematical Physics of Random Matrices},
Morikita Publishing,
ISBN: 978-4-627-06301-3

\bibitem{Kontsevich:1992ti}
M.~Kontsevich,
``Intersection theory on the moduli space of curves and the matrix Airy function,''
Commun. Math. Phys. \textbf{147}, 1-23 (1992)
doi:10.1007/BF02099526

\bibitem{Olshanetsky}
A.~ Olshanetsky and A.~ M. Perelomov, 
``Quantum integrable systems related to lie algebras,''
Phys. Rep. \textbf{94}, 313-404 (1983) 
https://doi.org/10.1016/0370-1573(83)90018-2



\bibitem{Semay:2011zz}
C.~Semay, F.~Buisseret and B.~Silvestre-Brac,
``The quantum N-body problem and the auxiliary field method,''
Few Body Syst. \textbf{50}, 211-213 (2011), 
doi:10.1007/s00601-010-0174-9

\bibitem{T.Tao}
  T.~Tao,\\
http://terrytao.wordpress.com/2013/02/08/the-harish-chandra-itzykson-zuber-integral-formula/.

\bibitem{Willemyns:2021rgm}
C.~T.~Willemyns and C.~Semay,
``Some specific solutions to the translation-invariant N-body harmonic oscillator Hamiltonian,''
J. Phys. Comm. \textbf{5} no.11, 115002 (2021) 
doi:10.1088/2399-6528/ac314e
[arXiv:2108.05171 [quant-ph]].

\bibitem{Witten:1990hr}
E.~Witten,
``Two-dimensional gravity and intersection theory on moduli space,''
Surveys Diff. Geom. \textbf{1}, 243-310 (1991)
doi:10.4310/SDG.1990.v1.n1.a5





\end{thebibliography}
\end{document}